\documentclass{amsart}

\usepackage{enumitem}
\usepackage[charter]{mathdesign}
\usepackage{amsmath}                

\usepackage{amsfonts}
\usepackage{graphicx, subfigure}
\usepackage{graphics}
\usepackage{epstopdf}
\usepackage{amsthm}

\usepackage[usenames, dvipsnames, pdftex]{color}
\usepackage[colorlinks=true,
        raiselinks=true,
        linkcolor=MidnightBlue,
        citecolor=ForestGreen,
        urlcolor=Mahogany,
        plainpages=false]{hyperref}

\usepackage{tikz}
\usepackage[all]{xy}
\usetikzlibrary{arrows,automata}
\usetikzlibrary{shapes,snakes}

\usepackage{mathrsfs}


  {
      \theoremstyle{plain}
      \newtheorem{theorem}{Theorem}
      \newtheorem{lemma}{Lemma}

      \newtheorem{assumption}{Assumption}
      \newtheorem{example}{Example}
      \newtheorem{remark}{Remark}

  }

\def\N{\mathbb{N}}
\def\RR{\mathbb{R}_{\geq 0}}

\newcommand{\Ex}{\mathsf{E}}
\newcommand{\pr}{\mathsf{P}}

\def\d{\mathrm{d}}
\def\F{\mathbb{F}}
\def\R{\mathbb{R}}
\def\N{\mathbb{N}}

\def\a{\mathfrak{A}}
\def\f{\mathfrak{F}}
\def\i{\mathfrak{I}}
\def\l{\mathfrak{L}}
\def\b{\mathfrak{B}}
\def\B{\mathrm{b}\b}

\def\tz{\tilde{z}}

\def\ve{\varepsilon}

\DeclareMathOperator{\arctanh}{arctanh}

\title{Computation of ruin probabilities\\ for general discrete-time Markov models}

\author{Ilya Tkachev and Alessandro Abate
        }
\thanks{
    \hspace{-0.6cm}
    I. Tkachev is with the Delft Center for Systems \& Control, Delft University of Technology, The Netherlands. Email: \texttt{i.tkachev@tudelft.nl}. \\
    A. Abate is with the Department of Computer Science, University of Oxford, United Kingdom, and with the Delft Center for Systems \& Control, Delft University of Technology, The Netherlands. Email: \texttt{alessandro.abate@cs.ox.ac.uk}.
}

\begin{document}
\maketitle

\begin{abstract}
  We study the ruin problem over a risk process described by a discrete-time Markov model.
  In contrast to previous studies that focused on the asymptotic behaviour of ruin probabilities for large values of the initial capital,
  we provide a new technique to compute the quantity of interest for any initial value, and with any given precision.
  Rather than focusing on a particular model for risk processes,
  we give a general characterization of the ruin probability by providing corresponding recursions and fixpoint equations.
  Since such equations for the ruin probability are ill-posed in the sense that they do not allow for unique solutions,
  we approximate the ruin probability by a two-barrier ruin probability,
  for which fixpoint equations are well-posed.
  We also show how good the introduced approximation is by providing an explicit bound on the error and by characterizing the cases when the error converges to zero.
  The presented technique and results are supported by two computational examples over models known in the literature, one of which is extremely heavy-tailed.

  \smallskip
  \noindent \textbf{Keywords:}
  Ruin probability;
  Heavy-tailed claim sizes;
  Error bounds;
  Maximum of a random walk;
  Markov processes;
  Bellman equations.
\end{abstract}



\pagestyle{myheadings}
\thispagestyle{plain}
\markboth{I. Tkachev and A. Abate}{Computation of ruin probabilities for Markov models}

\section{Introduction}
The ruin problem,
one of the most fundamental problems in risk theory,
studies the long-term behavior of a stochastic process that represents the evolution of the capital of an insurance company (shortly, a risk process).
The objective is to estimate the probability that the value of the risk process at some time becomes negative \cite{a2000}.
The explicit solution for this problem is available only in a limited number of instances,
even for simple models of the risk process such as the Cramer-Lundberg model that is given by a random walk \cite{m2009}.
Owing to this limitation,
the usual approach to the problem in the literature is to focus on a particular model
and to characterize its ruin probability by deriving upper bounds on its value \cite{c2002,cd2004,dr2009}.
However, the derivation of such bounds critically depends on whether the distribution of the claim size is heavy-tailed,
and the quality of most of the bounds is practically useful only when the initial capital is large.

This paper focuses on the ruin problem over risk processes described by general discrete-time Markov models,
which represent a rather rich class of models studied e.g. in \cite{c2002}, \cite{cd2004} and \cite{dr2009}.
This class can also encompass the ruin problem for a number of continuous-time models which,
since the structure of a risk process is essentially characterized by jumps,
can be equivalently solved for their discrete-time version \cite{m2009,p2008}.

The contribution of this paper is twofold and concerns both the characterization and the computational aspects related to the ruin problem.
Firstly, we consider a general Markov model that includes instances known in the risk theory literature as special cases.
In this framework we apply (theoretical and computational) methods from formal verification \cite{bk2008},
which have been recently developed in \cite{ta2012a-arXiv},
to derive the fixpoint equations for the ruin probability and for the related two-barrier ruin probability, and further elaborate on their properties.
The ruin problem and the two-barrier ruin problem are known respectively as the reachability and reach-avoid problem (or constrained-reachability problem) in formal verification \cite{aklp2010,rcsl2010}.
In particular, we show that the fixpoint equation for the ruin probability has a unique solution if and only if the solution is constant.
Secondly, we show how to use known upper bounds on the ruin probability to compute its value using the two-barrier problem with any given precision and for any value of the initial capital.
An important feature of this approach is that the tightness of the aforementioned bounds does not affect the quality of the approximation.
This feature allows the application of our technique to ill-behaved cases that are characterized by heavy-talied distributions,
whereas the generality of the approach allows dealing with complicated models in a unified way.

\smallskip

The work is structured as follows:
Section \ref{ssec:notation} describes notations and conventions used in this paper and Section \ref{ssec:models} introduces the models.
Section \ref{ssec:ruin-inv} continues with the introduction of the ruin probability problem for a general Markov model of the risk process,
its analysis and the presentation of methods for its solution.
Section \ref{ssec:CL} applies the results to the Cramer-Lundberg model (Example \ref{ex:1}),
whereas Section \ref{ssec:CL-I} to a model with interest rates (Example \ref{ex:2}).
Based on these models,
Section \ref{sec:cs} finally provides two case studies
(one of which is extremely heavy-tailed)
to display the computational viability of the presented methods and the improvements on results from the literature.

\section{Theoretical results}\label{sec:theor}

\subsection{Notations and conventions}\label{ssec:notation}

In order to provide a general characterization of ruin probabilities,
we introduce some notions from the theory of discrete-time Markov processes \cite{r1984}.
We focus on processes with Borel state spaces:
a topological space $E$ is called a Borel space if it is homeomorphic to a Borel subset of a complete separable metric space \cite{hll1996}.
The Borel $\sigma$-algebra of $E$ is denoted by $\b(E)$.
Examples of Borel spaces are the Euclidian space $\R^n$, any of its Borel subsets,
in particular the set of natural numbers $\N$,
or hybrid state spaces \cite{aklp2010}.
We further put forward the following notation $\N_0 := \N\cup\{0\}$ and $\RR := [0,\infty)$.

Let $(\Omega,\f,\pr)$ be some probability space.
A discrete-time Markov process on $E$ is a sequence of random variables $X = (X_n)_{n \in \N_0}$, where $X_n:(\Omega,\f)\to(E,\b(E))$
which satisfies the Markov property,
namely
\begin{equation*}
  \pr(X_{n+1}\in A|\f_n) = \pr(X_{n+1}\in A|X_n)
\end{equation*}
for all $A\in \b(E)$ and $n \in \N_0$.
Here by $\F := \{\f_n\}_{n\in \N_0}$ we denote the natural filtration of $X$,
that is $\f_n := \sigma\{X_k,0\leq k\leq n\}$.
For any $x\in E$ we denote $\pr_{x}(\cdot) = \pr(\cdot|X_0 = x)$ and by $\Ex_x$ the corresponding expectation. Finally, the transition kernel of $X$ is a stochastic kernel on $E$ given by $P(x,A) := \pr_x(X_1\in A)$ for all $x\in E$ and $A\in \b(E)$.\footnote{For a comprehensive introduction to general discrete-time Markov processes, and in particular for the rigorous construction of the measure $\pr_x$, the reader is referred to \cite[Chapter 2]{r1984}.}

\smallskip

The space of all bounded Borel measurable functions $f:E\to\R$ is denoted by $\B(E)$; it is a Banach space with a norm given by $\|f\| := \sup_{x\in E}|f(x)|$. For a linear operator $\a$ on $\B(E)$ its induced norm is given by $\|\a\| = \sup_{\|f\|\leq 1} \|\a f\|$. Whenever it holds that $\|\a\| <1$ we say that the operator $\a$ is a contraction. As an example, a transition kernel $P$ induces a linear operator on $\B(E)$ defined as follows:
\begin{equation*}
  Pf(x) := \Ex_x[f(X_1)] = \int_E f(y)P(x,\d y).
\end{equation*}
Clearly, $\|P\| = 1$ and hence it is not a contraction.
An additional relevant operator for this work is the invariance operator,
which is defined for every set $A\in \b(E)$ as
\begin{equation*}
  \i_Af(x) =1_A(x)Pf(x),
\end{equation*}
where $1_A$ denotes the indicator function of set $A$.
The possible contractivity of the operator $\i_A$ plays an important role below and depends on the set $A$.

\subsection{Models of risk processes}\label{ssec:models}
In this work we consider discrete-time risk models described by a process $X = (Z,\theta)$,
which is made up of a real-valued surplus process $Z$ and a process $\theta$, taking values in some Borel space $\Theta$,
which describes the evolution of parameters.
At any time instance $n \in \N_0$ the capital (of, say, a company) is $Z_n$,
whereas the value of the parameters of the model is given by $\theta_n$.
The parameters process $\theta$ can represent quantities such as the interest rate,
the return rate on risky investments,
the historical values of $Z$, etc. (see examples below).
Without loss of generality \cite[Proposition 8.6]{k1997a} we can assume that the Markov process $X = (Z,\theta)$ is described recursively in the following form:
\begin{equation}\label{model:general}
  \begin{cases}
    Z_{n+1} &= g(Z_n,\theta_n,\xi_n),\quad Z_0 = z,
    \\
    \theta_{n+1} &= h(Z_n,\theta_n,\xi_n),\quad \theta_0 = \theta,
  \end{cases}
\end{equation}
where $\xi$ is a sequence of i.i.d. random variables $\xi_n$ with values in a Borel space $\Xi$,
and functions $g$ and $h$ are measurable.
The state space for \eqref{model:general} is $E = \R\times \Theta$,
the sample space $\Omega = \Xi^{\N_0}$ is the space of infinite sequences over $\Xi$,
and $\F$ is the natural filtration of the process $(\xi_n)_{n \in \N_0}$.

We use $E$ or $\R\times\Theta$ as well as $X$ or $(Z,\theta)$ interchangeably, with the following convention:
when it is important to focus on general Markov structure of the process we use the first notation;
we otherwise employ the latter when we intend to emphasize the role of process $Z$.
In order to clarify the notations,
we provide examples of two discrete-time models taken from the literature,
and discuss their expression in the form of \eqref{model:general}.
The first model is parameter-free, whereas the second one presents the interest rate as its only dynamical parameter.

\smallskip

\begin{example}\label{ex:1}
  The first example is the Cramer-Lundberg model \cite{m2009}. In this model the surplus process is driven by premiums $G = (G_n)_{n \in \N_0}$ and claims $C = (C_n)_{n \in \N_0}$, both being represented by sequences of non-negative i.i.d. random variables. We denote the cumulative distribution functions of $G_n$ and $C_n$ as $F_G$ and $F_C$, respectively. The update equation for the surplus process has the following form:
  \begin{equation}\label{model:CL}
    Z_{n+1} = Z_n+G_n-C_n,
  \end{equation}
  where $Z_0 = z\in \mathbb R$. Since there are no parameters in the model, we define $\Theta = \{\theta\}$ to be an auxiliary singleton space. Furthermore, we chose $\Xi = \R^2$ with $\xi = (G,C)$. Then the functions $g$ and $h$ are given by $g(Z,\theta,\xi) = Z+G-C$ and $h(Z,\theta,\xi) = \theta$. Since the parameter $\theta$ plays no role in the Cramer-Lundberg model, in the following we shall notationally omit it.
\end{example}

\smallskip

\begin{example}\label{ex:2}
  As a second example, we consider a model that takes into account the effect of interest \cite{c2002,wh2008,yz2006}.
  The dynamics for the surplus is the following:
  \begin{equation}\label{model:CL-interest:Z}
    Z_{n+1} = (Z_n+G_n)(1+I_n) - C_n,
  \end{equation}
  where $Z_0 = z\in \mathbb R$ and $G$ and $C$ are as in Example \ref{ex:1}. The process $I$ represents the short-term interest rate, and is given by a sequence of non-negative random variables with the following dynamics \cite{c2002}:
  \begin{equation}\label{model:CL-interest:I}
    I_{n+1} = \alpha I_n+W_n,
  \end{equation}
  where $I_0 = i\in \RR$,
  the constant $\alpha \in [0,1)$,
  and $(W_n)_{n \in \N_0}$ is a sequence of non-negative i.i.d. random variables with cumulative distribution function denoted by $F_W$.
  In order to describe this model within the framework of \eqref{model:general},
  we select $\Theta = \R_{\geq 0}$ to be the state space for the short-term interest rate (the only parameter with dynamics in the model),
  whereas $\Xi = \R^3$ with $\xi = (G,C,W)$.
  Then $g(Z,I,\xi) = (Z+G)(1+I)-C$, and $h(Z,I,\xi) = \alpha I + W$.
\end{example}


\subsection{The ruin probability as a reachability probability}
\label{ssec:ruin-inv}

Given a risk process $X = (Z,\theta)$ as in \eqref{model:general},
the ruin probability \cite{a2000} is defined as
\begin{equation}\label{eq:ruin}
  \psi(z,\theta) = \pr_{z,\theta}\left\{\inf\limits_{n \in \N_0}Z_n < 0\right\},
\end{equation}
and corresponds to the probability that, at some moment, the capital of the insurance company becomes negative.
As mentioned in the Introduction,
this formulation can be encompassed by a class of problems in the area of formal verification of stochastic processes that are known as \emph{reachability problems} \cite{bk2008}.
This class of problems is defined as follows.
Let $A\in \b(E)$ be a given set referred to as the ``target set''.
The first hitting time of $A$ is a random variable $\tau_A:\Omega\to\N_0\cup\{\infty\}$,
defined as $\tau_A = \inf\{n \in \N_0:X_n\in A\}$.
The value function corresponding to the above reachability problem denotes the probability that the first hitting time is finite:
\begin{equation*}
  v(x;A) := \pr_x\{\tau_A<\infty\}.
\end{equation*}
Clearly, if we select the target set to be of the form $A = \R_{<0}\times \Theta$,
the following identity holds:
\begin{equation}\label{eq:ruin-reach}
  \psi(z,\theta) = v(z,\theta;\R_{<0}\times \Theta),
\end{equation}
which shows that a ruin probability is a reachability probability for the set $A$ introduced above.
Thus, the ruin problem can be tackled using techniques -- both theoretical and computational -- developed
for reachability verification \cite{APLS08b,bk2008,ta2012a-arXiv}.

Let us recall some facts about the reachability value function $v(x;A)$ (details can be found in \cite{ta2012a-arXiv}).
First of all,
the problem allows for two alternative characterizations:
either as a limit of iterations,
or as a solution of a fixpoint equation.
With focus on the former approach,
for $n\in \N_0$ let us denote by $v_n(x;A) := \pr_x(\tau_A\leq n)$ the bounded-horizon reachability value functions.\footnote{An alternative interpretation is that,
for a fixed $x$, the sequence $(v_n(x;A))_{n\in \N_0}$ is the cumulative distribution function of the random variable $\tau_A$ with respect to the measure $\pr_x$.}
We have that
\begin{equation}\label{eq:iteration-v}
  \begin{cases}
    v_{n+1}(x;A) &= 1_A(x)+\i_{A^c} v_n(x;A),
    \\
    v_0(x;A) &= 1_A(x),
  \end{cases}
\end{equation}
and the continuity of the probability measure leads to $v(x;A) = \lim_{n\to\infty}v_n(x;A)$,
where the limit point-wise on $E$ is monotonically non-decreasing.
As a result, $v(x;A)$ is the least non-negative solution of the following Bellman fixpoint equation:
\begin{equation}\label{eq:fixpoint_v}
  v(x;A) = 1_A(x)+\i_{A^c} v(x;A),
\end{equation}
that is, for any $v^*$ that is a solution of \eqref{eq:fixpoint_v} with $v^*\geq 0$,
it holds that $v(x;A)\leq v^*(x;A)$ for all $x\in E$.

\smallskip

\begin{remark}\label{rem:fixpoint-v}
  Notice that neither of the characterizations above allows expressing the reachability probability function $v(x;A)$ explicitly,
  due to the following reasons.
  First, the Bellman fixpoint equation in \eqref{eq:fixpoint_v} always admits a trivial solution,
  namely $v(x;A) = 1$ for all $x\in X$,
  which can be non-unique \cite{ta2012a-arXiv}.
  Second, the operator $\i_{A^c}$ in \eqref{eq:iteration-v} is in general not contractive,
  thus even though the convergence is monotonic, the rate of convergence of $v_n\to v$ as $n\to\infty$ is unknown.
\end{remark}

\smallskip

We have discussed that equation \eqref{eq:ruin-reach} relates the calculation of the ruin probability to that of a reachability problem.
As an immediate consequence, the recursions in \eqref{eq:iteration-v} together with equation \eqref{eq:fixpoint_v} apply to the ruin problem for any risk model given in the form \eqref{model:general}.
As examples from the literature,
the characterization of recursions for the finite-time ruin probabilities given in \cite[Lemma 2.1]{c2002}, \cite[Lemma 2.1]{cd2004}, \cite[Theorem 2.1]{wh2008} and \cite[Section 4.4]{yz2006} are special cases of \eqref{eq:iteration-v} for the corresponding risk models.
Much in the same way, the fixpoint equations presented in \cite[Lemma 2.1]{c2002}, \cite[Lemma 2.1]{cd2004}, \cite[Theorem 2.2]{wh2008} and \cite[Section 4.5]{yz2006} are special cases of \eqref{eq:fixpoint_v}.
Notice that the non-uniqueness of the solution of these fixpoint equations follows from the fact that $\psi\equiv 1$ is always a solution,
whereas the ruin probability is often not a constant function of the initial capital.
As discussed, alternative iterative approaches to compute the desired solution of these problems may run into the convergence issues described in Remark \ref{rem:fixpoint-v}.

\subsection{Solution of the reachability problem and general computation of the ruin probability}

The formal connection between ruin probability problems and probabilistic reachability ones suggests that the investigation of the latter class can yield insight on the former.
In order to tackle the technical issues discussed in Remark \ref{rem:fixpoint-v},
we leverage a new approach to study reachability problems and consequently also ruin problems.

\smallskip

With the goal of dealing with the possible non-uniqueness of the solution of the Bellman equation for the (infinite horizon) reachability problem,
we introduce a related one, known within the formal verification area: the reach-avoid problem \cite{rcsl2010}.
The reach-avoid problem can be considered as a generalization of the two-barrier ruin problem \cite[Section XI.1]{a2000}.
More precisely, for any two sets $A,B\in \b(E)$ we define the reach-avoid probability as follows:
\begin{equation*}
  w(x;A,B) := \pr_x\left\{\tau_B<\tau_{A^c},\tau_B<\infty\right\}.
\end{equation*}
As a result, $w(x;A,B)$ is the probability that,
starting from the initial condition $x\in E$,
the process will eventually hit the set $B$,
while always staying within the set $A$ beforehand.
As an example, the two-barrier ruin problem,
which evaluates the probability that the capital of an insurance company will hit a given threshold $y$ before the company goes bankrupt,
can be expressed via the following formula:
\begin{equation*}
  \phi(z,\theta,y) := w(z,\theta;\R_{\geq 0}\times \Theta,\R_{\geq y}\times \Theta),
\end{equation*}
Let us now recall how one can characterize and compute in general the value of $w(x;A,B)$.
As with the value functions for probabilistic reachability,
one can employ finite-horizon value functions for the reach-avoid problem in order to characterize the infinite-horizon value function $w$.
More precisely, let us define for any $n\in \N_0$ and sets $A,B\in \b(E)$:
\begin{equation*}
  w_n(x;A,B) := \pr_x\left\{\tau_B<\tau_{A^c},\tau_B\leq n\right\}.
\end{equation*}
It was shown in \cite{rcsl2010} that functions $w_n$ satisfy the following recursions, for all $n\in \N_0$:
\begin{equation*}
  \begin{cases}
    w_{n+1}(x;A,B) &= 1_B(x)+\i_{A\setminus B} w_n(x;A,B),
    \\
    w_0(x;A,B) &= 1_B(x).
  \end{cases}
\end{equation*}
Let us mention, that there exist well-developed numerical methods to find functions $w_n$ with any given precision based on the state space discretization \cite{aklp2010}.

Furthermore, it holds that $w(x;A,B) = \lim_{n\to\infty}w_n(x;A,B)$ \cite{rcsl2010},
where the limit is monotonically non-decreasing point-wise on $E$.
It follows that the fixpoint characterization holds as:
\begin{equation}\label{eq:fixpoint_w}
  w(x;A,B) = 1_B(x) + \i_{A\setminus B}w(x;A,B).
\end{equation}

For any $x\in E$ and $B\in \b(X)$ it holds that $v(x;B) = w(x;E,B)$,
thus the recursions and the fixpoint equation for the reachability value functions are special cases of those for the reach-avoid ones.
As a consequence,
in general the recursions and the fixpoint equation for the reach-avoid problem suffer from the issues discussed in Remark \ref{rem:fixpoint-v}.
However, we next show that,
by choosing sets $A$ and $B$ appropriately,
these issues can be mitigated:
in particular,
we attain the uniqueness of the solution of the Bellman equation and the contractivity of the operator $\i_A$,
which leads to computable solutions of the reach-avoid problem \cite{ta2012a-arXiv}.
Furthermore, we show how the solution of the reachability problem can be approximated by means of the reach-avoid one.
To achieve these goals we need the following lemma.

\smallskip

\begin{lemma}\label{lem:decomp}
  For any two sets $K,L\in \b(X)$
  \begin{itemize}
    \item[i.] the following bound holds:
      \begin{equation*}
        \left|v(x;K) - \left(1-w\left(x;K^c,L\right)\right)\right|\leq \sup_{x\in L}v(x;K);
      \end{equation*}
    \item[ii.] if $v(\cdot;K \cup L) \equiv 1$, then $w(x;K^c,L)$ is the unique solution of the corresponding version of equation \eqref{eq:fixpoint_w}.
    In particular, if there exists an integer $m\in \N_0$ for which it holds that $\delta_m:= \inf_{x\in X}v_m(x;K\cup L)>0$,
    then $v(\cdot;K \cup L) \equiv 1$, and furhtermore
      \begin{equation}
        0\leq w(x;K^c,L) - w_n(x;K^c,L)\leq \frac{m}{\delta_m}(1-\delta_m)^{\left\lfloor\frac{n}{m}\right\rfloor}
      \end{equation}
      for any $x\in X$ and for any $n \in \N_0$.
  \end{itemize}
\end{lemma}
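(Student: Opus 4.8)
The plan is to prove the two items separately, building on the decomposition between reachability and reach-avoid value functions. For item (i), the starting observation is the event-level relation: starting from $x$, either the process hits $K$ at some time (the event defining $v(x;K)$), or it stays in $K^c$ forever, or it stays in $K^c$ until hitting $L\setminus K$. More precisely, I would decompose the complement event $\{\tau_K=\infty\}$ according to whether $\tau_L<\tau_K$ or not, and relate $1-w(x;K^c,L)$ to the probability of never hitting $K$ \emph{without} first passing through $L$. The difference $v(x;K)-(1-w(x;K^c,L))$ should then be expressible as the probability of a "bad" event in which the process reaches $L$ first (staying in $K^c$ meanwhile) and only afterwards hits $K$; by the strong Markov property at $\tau_L$, the probability of subsequently hitting $K$ from a point of $L$ is at most $\sup_{x\in L}v(x;K)$. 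Bounding the outer probability by $1$ gives the claimed inequality. The one subtlety is the bookkeeping on the boundary $L\cap K$ and on the event $\tau_L=\tau_K$, which I would handle by being careful that $w(x;K^c,L)$ uses $\tau_{(K^c)^c}=\tau_K$ as the "avoid" exit time, so the relevant sets in the recursion are $K^c\setminus L$ versus $L$.

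For the first part of item (ii), suppose $v(\cdot;K\cup L)\equiv 1$. I would show that any nonnegative bounded solution $w^*$ of \eqref{eq:fixpoint_w} with $A=K^c$, $B=L$ (so the invariance operator is $\i_{K^c\setminus L}$) equals $w(\cdot;K^c,L)$. Since $w(\cdot;K^c,L)$ is the least nonnegative solution (by the same argument that makes $v$ the least solution of \eqref{eq:fixpoint_v}), it suffices to show $w^*\leq w(\cdot;K^c,L)$, or equivalently that the difference $u:=w^*-w$ is $\leq 0$. The function $u$ satisfies $u=\i_{K^c\setminus L}u$ on all of $X$ (the $1_L$ terms cancel), hence $u=\i_{K^c\setminus L}^n u$ for every $n$, so $|u(x)|\leq \|u\|\,\pr_x\{\text{process stays in }K^c\setminus L\text{ for }n\text{ steps}\}$. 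Now $v(\cdot;K\cup L)\equiv 1$ means the process hits $K\cup L$ almost surely from every starting point, i.e. $\pr_x\{\tau_{K\cup L}=\infty\}=0$; but staying in $K^c\setminus L=(K\cup L)^c$ forever is exactly the event $\{\tau_{K\cup L}=\infty\}$, so letting $n\to\infty$ forces $u\equiv 0$. This is the cleanest route to uniqueness.

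For the quantitative part, assume $\delta_m:=\inf_x v_m(x;K\cup L)>0$ for some $m$. First, $v_m\leq v$ pointwise gives $v(\cdot;K\cup L)\geq\delta_m>0$ everywhere; but $v(\cdot;K\cup L)$ is itself a $\{0,1\}$-valued... no — rather, I would argue that $\delta_m>0$ implies $v(\cdot;K\cup L)\equiv1$ via a renewal/subadditivity estimate: from any $x$, the probability of \emph{not} hitting $K\cup L$ within $m$ steps is at most $1-\delta_m$, and iterating over blocks of length $m$ using the Markov property at times $m,2m,\dots$ gives $\pr_x\{\tau_{K\cup L}>km\}\leq(1-\delta_m)^k\to0$, hence $v(\cdot;K\cup L)\equiv1$. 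The same block estimate controls the convergence rate: from the identity $w-w_n=\i_{K^c\setminus L}^{?}(\cdots)$, one shows $0\leq w(x;K^c,L)-w_n(x;K^c,L)\leq \pr_x\{\text{stay in }(K\cup L)^c\text{ for }n\text{ steps}\}\cdot(\text{bound on }w)$, and since $w\leq1$ and the staying probability after $\lfloor n/m\rfloor$ full blocks is at most $(1-\delta_m)^{\lfloor n/m\rfloor}$, we get the bound $\tfrac{m}{\delta_m}(1-\delta_m)^{\lfloor n/m\rfloor}$ — the factor $m/\delta_m$ absorbing the partial block and a crude $w_n$ vs. $w$ estimate.

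The main obstacle I anticipate is getting the constant in the rate exactly right: the naive block argument yields $(1-\delta_m)^{\lfloor n/m\rfloor}$ for the staying probability, but pinning down why the prefactor is $m/\delta_m$ rather than just $1$ requires relating $w-w_n$ not to "$n$ steps of staying" directly but to a sum of per-step contributions over the last partial block, each bounded using $\delta_m$ again; this is where one must be careful about whether the estimate $\inf_x v_m(x;K\cup L)\geq\delta_m$ is applied at the right epochs and whether $\i_{K^c\setminus L}$ versus $\i_{K^c}$ matters (it should, since $\tau_L\leq\tau_{K\cup L}$ can cut the excursion short, only helping the bound). The event-algebra details in (i) near the set $L\cap K$ and the precise prefactor in (ii) are the two places I would slow down.
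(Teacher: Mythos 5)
The paper's own ``proof'' is a pointer to results in the cited reference \cite{ta2012a-arXiv}, so your self-contained argument is a genuinely different, more elementary route. Most of it is sound, but there is a real gap in part (i) and a misdirected worry in part (ii).

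In part (i), you identify $v(x;K)-(1-w(x;K^c,L))$ with the probability of a single bad event in which the process hits $L$ before $K$ and only afterwards hits $K$. That is not the whole story. Unwinding the definitions, $w(x;K^c,L)=\pr_x\{\tau_L<\tau_K\}$, and a direct event computation gives
\begin{equation*}
v(x;K)-\bigl(1-w(x;K^c,L)\bigr)=\pr_x\{\tau_L<\tau_K<\infty\}-\pr_x\{\tau_{K\cup L}=\infty\}.
\end{equation*}
The first term is controlled exactly as you say, by the strong Markov property at $\tau_L$, giving $\sup_{x\in L}v(x;K)$. But you silently drop the second term, and in general it is \emph{not} bounded by $\sup_{L}v(\cdot;K)$: on the three-state chain where $0$ is absorbing, $1$ jumps deterministically to an absorbing state $2$, with $K=\{0\}$, $L=\{1\}$ and $x=2$, the difference is $-1$ while $\sup_L v(\cdot;K)=0$. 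So either the inequality carries an implicit hypothesis such as $v(\cdot;K\cup L)\equiv 1$ (exactly the regime in which the paper later applies part (i), and under which $\pr_x\{\tau_{K\cup L}=\infty\}\equiv 0$), or the argument must bound the second term separately. Your write-up should surface this term and dispose of it under that hypothesis rather than assert that the difference is a single probability.

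In part (ii), the uniqueness argument via $u=\i_{(K\cup L)^c}^n u$ and the block estimate $\pr_x\{\tau_{K\cup L}>km\}\leq(1-\delta_m)^k$ are both correct. However, your concern about reconstructing the prefactor $m/\delta_m$ points the wrong way. The direct estimate
\begin{equation*}
w(x;K^c,L)-w_n(x;K^c,L)=\pr_x\{\tau_L<\tau_K,\,n<\tau_L<\infty\}\leq\pr_x\{\tau_{K\cup L}>n\}\leq(1-\delta_m)^{\lfloor n/m\rfloor}
\end{equation*}
is already \emph{stronger} than the stated bound, since $m/\delta_m\geq 1$ whenever $m\geq 1$ (and $m=0$ forces $K\cup L=X$, in which case $w\equiv w_n$ and both sides vanish). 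You do not need to account for a partial block or a crude $w$-versus-$w_n$ estimate: just observe that your clean geometric rate implies the claimed weaker one. The factor $m/\delta_m$ is an artifact of the more general contraction machinery in the reference, not something your argument needs to reproduce.
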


\begin{proof}
  Part [i.] follows directly from \cite[Lemma 2]{ta2012a-arXiv}, whereas part [ii.] can be directly shown by \cite[Proposition 2, Theorem 1]{ta2012a-arXiv}.
\end{proof}

\smallskip

Let us discuss how Lemma \ref{lem:decomp} leads to attain the goals described above.
Part [i.] shows that the reachability function $v(\cdot;K)$ can be approximated using the reach-avoid function $w(\cdot;K^c,L)$,
which leads to the computation of the latter.
Part [ii.] in turn provides conditions to compute the infinite horizon reach-avoid function $w(x;K^c,L)$ using finite-horizon ones $w_n(x;K^c,L)$.
Note that the latter finite-horizon functions can be computed with available schemes \cite{aklp2010}.

Before we apply Lemma \ref{lem:decomp} to the case of the ruin probability problem,
let us elucidate the direction we are going to pursue.
With focus on $v(x;A)$,
a given set $A$ allows for the following dichotomy depending on the quantity $\alpha := \inf_{x\in X}v(x;A)$:

\begin{enumerate}
  \item If $\alpha > 0$, then there exists $m\in \N_0$ such that $\inf_{x\in X}v_m(x;A) > 0.$
  As a result, if in [ii.] of Lemma \ref{lem:decomp} we take $K = \emptyset$ and $L = A$, we get that $v\equiv 1$, hence the problem is solved.
  \item If $\alpha = 0$, then we are able to find the set $B\subset A^c$ small enough such that $\sup_{x\in B}v(x;A)$ is less than a required precision level. If at the same time the set $B$ is big enough such that $\delta_m$ as per [ii.] of Lemma \ref{lem:decomp} is positive, then
      \begin{equation}\label{eq:bounds-full}
        \left|v(x;A) - \left(1-w_n\left(x;A^c,B\right)\right)\right|\leq \sup_{x\in B}v(x;A) + \frac{m}{\delta_m}(1-\delta_m)^{\left\lfloor\frac{n}{m}\right\rfloor}.
      \end{equation}
\end{enumerate}
As it has been mentioned above, functions $w_n$ can be computed numerically with any given precision, so \eqref{eq:bounds-full} provides a useful way of approximating the reachability probability $v$.
The crucial step here is, however, the choice of $B$ and the construction of bounds on $\sup_{x\in B}v(x;A)$.
While there is no general procedure for that, let us show how the described idea applies to the case of the ruin probability problem.

\smallskip

We start with the choice of the set $B$.
Recall that in order to connect ruin and reachability problems by \eqref{eq:ruin-reach} we pick a set $A_0 := \R_{\geq 0}\times \Theta$.
We also denote by $B_y = \R_{> y}\times \Theta$ the candidate for the choice of the set $B$, for $y>0$. Thus
\begin{equation*}
  w(z,\theta; A_0,B_y) = \phi(z,\theta,y)
\end{equation*}
corresponds to the two-barrier ruin probability, with barriers chosen to be equal to $0$ and $y$.
Note that from Lemma \ref{lem:decomp} it follows that in order for $\phi$ to be the unique solution of the corresponding fixpoint equation
\begin{equation}\label{eq:fixpoint_phi}
  \phi(z,\theta,y) = 1_{(y,\infty)}(z)+1_{[0,y]}(z)\cdot\int\limits_{\R\times \Theta} \phi(z',\theta',y)P((z,\theta),\mathrm dz'\times \mathrm d\theta')
\end{equation}
it is sufficient that $v(\cdot;A_0\cup B_y)\equiv 1$.
With focus on the latter condition,
let us remark that for most of the models of risk processes an even stronger condition holds true,
namely that $v(\cdot;B_y)\equiv 1$.
This is always satisfied whenever the expected value of the increment of the risk process is greater than a positive constant.
This condition corresponds to the known Net Profit Condition (NPC) \cite{m2009},
which assures that the insurance company is profitable in the long run (examples of NPC are given in Sections \ref{ssec:CL}, \ref{ssec:CL-I}).

Clearly, for any fixed $(z,\theta) \in \R\times\Theta$,
we have that $\lim_{y\to\infty}\phi(z,\theta,y) = 1-\psi(z,\theta)$.
This fact has been used in \cite{a2000} over a few specific instances where $\phi$ admits a closed form solution,
so that the exact solution for $\psi$ has been obtained by taking the limit $y\to\infty$.
From the new perspective offered by Lemma \ref{lem:decomp},
since function $\phi$ can now be computed via $w$,
the goal is to find a $y$ such that $w$ and $1-\psi$ are close enough.
In other words,
the discussion above suggests to approximate the ruin probability $\psi$ via computable two-barrier ruin probabilities $\phi$,
which allows to formulate a version of Lemma \ref{lem:decomp} for ruin probabilities.
In order to achieve practically useful results,
let us raise the following assumption on the model of the risk process:

\smallskip

\begin{assumption}\label{ass:1}
  Suppose that in \eqref{model:general}, the map $h$ does not depend on $z$: $h(z,\theta,\xi) = h(\theta,\xi)$.
  Moreover, assume that the map $g$ satisfies $g(z',\theta,\xi)\leq g(z'',\theta,\xi)$ whenever $z'\leq z''$.
\end{assumption}

\smallskip

Assumption \ref{ass:1} can be interpreted as follows.
Firstly, the evolution of the parameters of the model,
described by function $h$, is assumed to be independent of the capital of the company.
Secondly, for a fixed value of the parameters and for a fixed realization of the noise,
the higher is the current value of the capital, the higher will be the next value in time.
Such assumptions are satisfied by a wide class of models,
in particular by those discussed in Examples \ref{ex:1} and \ref{ex:2}.

\smallskip

\begin{theorem}\label{thm:main}
  Let Assumption \ref{ass:1} hold true.
  Denote by $\psi^*(y):= \sup_{\theta\in \Theta}\psi(y,\theta)$.
  Then for any $(z,\theta) \in \R\times \Theta$ and for all $y>0$,
  it holds that
  \begin{equation}\label{eq:bounds-main}
    |\psi(z,\theta) - (1 - \phi(z,\theta,y))|\leq \psi^*(y).
  \end{equation}
\end{theorem}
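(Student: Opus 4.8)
The plan is to derive \eqref{eq:bounds-main} as a direct specialization of part [i.] of Lemma \ref{lem:decomp}, once we identify the right target sets and verify that Assumption \ref{ass:1} makes the crude bound $\sup_{x\in L}v(x;K)$ coincide with $\psi^*(y)$. Recall that $\psi(z,\theta) = v(z,\theta;\R_{<0}\times\Theta)$ by \eqref{eq:ruin-reach}, and that $\phi(z,\theta,y) = w(z,\theta;A_0,B_y)$ with $A_0 = \R_{\geq 0}\times\Theta$ and $B_y = \R_{>y}\times\Theta$. The first step is therefore to set $K := \R_{<0}\times\Theta$ (so that $K^c = A_0$) and $L := B_y$, and to simply plug these into Lemma \ref{lem:decomp}[i.], which immediately yields
\begin{equation*}
  |\psi(z,\theta) - (1 - \phi(z,\theta,y))| = |v(x;K) - (1 - w(x;K^c,L))| \leq \sup_{x\in L} v(x;K) = \sup_{(z',\theta')\in B_y} \psi(z',\theta').
\end{equation*}

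The remaining — and genuinely nontrivial — step is to show that $\sup_{(z',\theta')\in B_y}\psi(z',\theta')$ equals $\psi^*(y) = \sup_{\theta\in\Theta}\psi(y,\theta)$; that is, to prove that among all initial capitals $z' > y$, the worst-case ruin probability (over $\theta$) is attained in the limit $z'\downarrow y$, so that the supremum over the open half-line $z' > y$ reduces to the value at the boundary point $z' = y$. This is exactly where Assumption \ref{ass:1} enters. I would prove a monotonicity statement: for fixed $\theta$, the map $z\mapsto\psi(z,\theta)$ is non-increasing. The natural route is a coupling / pathwise comparison argument on the recursion \eqref{model:general}: fix a realization of the i.i.d. noise sequence $(\xi_n)$ and run two copies of the process from initial states $(z',\theta)$ and $(z'',\theta)$ with $z'\leq z''$. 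Since $h$ does not depend on $z$ (first half of Assumption \ref{ass:1}), both copies share the same parameter trajectory $(\theta_n)$ for every $n$. Then, by induction on $n$ using the monotonicity of $g$ in its first argument (second half of Assumption \ref{ass:1}), one gets $Z_n' \leq Z_n''$ for all $n$ almost surely; hence $\{\inf_n Z_n'' < 0\} \subseteq \{\inf_n Z_n' < 0\}$ pathwise, and taking $\pr$ gives $\psi(z'',\theta)\leq\psi(z',\theta)$.

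With this monotonicity in hand, for every $z' > y$ and every $\theta$ we have $\psi(z',\theta)\leq\psi(y,\theta)\leq\psi^*(y)$, so $\sup_{(z',\theta')\in B_y}\psi(z',\theta')\leq\psi^*(y)$, which combined with the displayed bound above finishes the proof. (One could also note the reverse inequality is essentially an equality in the relevant limit, but only the $\leq$ direction is needed for \eqref{eq:bounds-main}.) The main obstacle I anticipate is making the pathwise coupling rigorous: one must be careful that the monotonicity induction $Z_n'\leq Z_n''$ genuinely only needs $g$ monotone in $z$ \emph{with the other two arguments held fixed at their common values}, which does hold because the shared parameter trajectory guarantees $\theta_n' = \theta_n''$ at every step — so at step $n{+}1$ we compare $g(Z_n',\theta_n,\xi_n)$ with $g(Z_n'',\theta_n,\xi_n)$ and apply Assumption \ref{ass:1} directly. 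Everything else is a routine application of the already-established Lemma \ref{lem:decomp}.
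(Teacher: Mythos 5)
Your proposal is correct and follows essentially the same route as the paper's own proof: specialize Lemma~\ref{lem:decomp}[i.] with $K = \R_{<0}\times\Theta$ and $L = B_y$, then reduce the right-hand side $\sup_{(z',\theta')\in B_y}\psi(z',\theta')$ to $\psi^*(y)$ via a pathwise coupling argument that uses both halves of Assumption~\ref{ass:1} (identical parameter trajectories because $h$ ignores $z$, then an induction on $n$ using monotonicity of $g$ in its first argument). You are in fact slightly more careful than the paper's write-up, which asserts the equality $\sup_{z>y}\psi(z,\theta)=\psi(y,\theta)$ when only the inequality $\leq$ follows from (and is needed for) the monotonicity.
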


\begin{proof}
  From Lemma \ref{lem:decomp} it immediately follows that
  \begin{equation*}
    |\psi(z,\theta) - (1 - \phi(z,\theta,y))| \leq \sup_{\theta\in \Theta}\sup_{z>y}\psi(z,\theta),
  \end{equation*}
  thus it is sufficient to show that $\sup_{z>y}\psi(z,\theta) = \psi(y,\theta)$,
  for any fixed $\theta\in \Theta$.
  In order to show the latter equality,
  it is sufficient to prove that $\psi(\tz,\theta)\leq\psi(z,\theta)$ if $z\leq \tz$ for all $\theta\in \Theta$:
  this is done by coupling techniques \cite{l1992}.
  Consider two arbitrary points $z,\tz\in \R$ such that $z\leq \tz$,
  and a $\theta_0\in \Theta$.
  Let us denote by $(Z,\theta)$ and $(\tilde Z,\theta)$ two processes defined by \eqref{model:general} and starting respectively from $(z,\theta_0)$ and $(\tz,\theta_0)$.    Under the assumptions in the statement of the proposition it holds that $Z_n(\omega)\leq \tilde Z_n(\omega)$ for any $n\in \N_0$ and any fixed $\omega\in \Omega$ so the following inclusion holds true:
  \begin{equation}\label{eq:thm.main}
    \left\{\omega\in \Omega:\inf\limits_{n \in \N_0}\tilde Z_n(\omega)<0\right\}\subseteq\left\{\omega\in \Omega:\inf\limits_{n \in \N_0}Z_n(\omega)<0\right\}.
  \end{equation}
  Finally, since $\psi(\tz,\theta)$ is the probability of the left-hand side of \eqref{eq:thm.main},
  and $\psi(z,\theta)$ is that of the right-hand side,
  it follows that $\psi(\tz,\theta)\leq \psi(z,\theta)$.
\end{proof}

\smallskip

Let us discuss the implications of Theorem \ref{thm:main}.
Clearly, to apply the result one needs the information on the upper bounds on $\psi^*$.
The literature on the ruin probability problem shows a remarkable interest in studying the asymptotic behavior of the quantity $\psi^*(y)$ for large values of $y$,
since it provides an upper-bound on the ruin probability whenever the capital is large enough.
However, up to our knowledge such bounds are rarely tight \cite{fkz2011,k2011},
hence they may not be practically relevant whenever the capital is not sufficiently large,
and in particular not useful to estimate $\psi(0,\theta)$.
Notice that although $\psi(z,\theta) = 1$ for $z<0$, in general
$\psi(0,\theta)$ can still be much smaller than $1$ (cfr. Section \ref{sec:cs}).
Such a value may be of interest since it describes the ruin probability for negligibly small initial capitals.

Let us then emphasize the relationship between our results and those on the asymptotic behavior of ruin probabilities.
Although Theorem \ref{thm:main} requires the knowledge of an explicit upper bound on $\psi^*$,
it further shows that such a bound is not only useful if we are interested in $\psi(y,\theta)$ for large values of $y$,
but it can also be employed as an approximation bound for the solution of the ruin problem as in the right-hand side of \eqref{eq:bounds-main}.
As we mentioned above, the quantity $1-\phi(z,\theta,y)$ can be computed with any given precision,
and it differs from the desired value $\psi(z,\theta)$ by the quantity $\psi^*(y)$, which is usually small if $y$ is large.
As a result, Theorem \ref{thm:main} gives a new perspective on the asymptotic bound based on $\psi^*$.

Let us stress that in the literature bounds on $\psi^*(y)$ are not known for general Markov models of the risk process, even for large values of $y$.
In particular, such bounds are not known even for the basic Cramer-Lundberg model (given in Example \ref{ex:1}) in the case when the second moment of the claim variable is infinite \cite{k2011}.
In the next section we recapitulate both classic and recent results for the Cramer-Lundberg model and benchmark them with the outcome of the proposed new technique,
which is thereafter also applied to the model in Example \ref{ex:2}.
This discussion is further supported with computational examples given in Section \ref{sec:cs}.

\subsection{Asymptotic behavior of the ruin probability for the Cramer-Lundberg model}
\label{ssec:CL}

Let us recall that the Cramer-Lundberg model is given by a random walk on a real line:
\begin{equation*}
  Z_{n+1} = Z_n+\eta_n,
\end{equation*}
with $\theta_n\equiv \mathrm{const}$, and where $\eta_n = G_n - C_n$ is a sequence of iid random variables and $G$, $C$ are as in Example \ref{ex:1}.
The ruin probability over such model under the NPC \cite{m2009}
\begin{equation}\label{eq:npc-lun}
  a:=\Ex\eta_0 = \Ex G_0-\Ex C_0>0
\end{equation}
can thus be related to the tail probability of the maximum of a random walk with a negative drift,
which in this particular case is given by $-Z$ \cite{m2009}.
The distribution of the maximum of a random walk is in itself an important problem (see e.g. \cite[Chapter 5]{fkz2011}),
and further corresponds to the tail probability of the equilibrium waiting time in a G/G/1 queue \cite{c2002}.
The explicit expression of this quantity has only been found in a limited number of cases,
while most of the results are asymptotic \cite{bb2008,k1997}.
Recall that we are interested not only in the asymptotic behavior of $\psi(z)$ for $z\to\infty$,\footnote{Since $\theta$ in the model is a singleton point, we notationally omit the dependence of the ruin probability $\psi$ on $\theta$. Due to the same reason, it holds that $\psi^* \equiv \psi$.} but specifically in the upper bounds on $\psi$ that are needed in \eqref{eq:bounds-main}.

Let us mention the main results in the literature and the challenges for this problem (more details can be found in \cite{bb2008,fkz2011,k1997,m2009}).
The derivation of upper bounds on $\psi(z)$, for large $z$,
strongly depends on the properties of the moment generating function for the random variable $(-\eta_0)$,
which is given by $m(t) = \Ex \mathrm e^{-t\eta_0}$,
and in particular on the quantity $\lambda = \sup\{t\geq 0:m(t)\leq 1\}$.
In the Cramer case \cite{k1997} characterized by $\lambda>0$ and $m(\lambda) = 1$,
the quantity $\lambda$ is known as the \textit{adjustment} or \textit{Lundberg coefficient},
and it holds that
\begin{equation}\label{eq:bounds-lund}
  \psi^*(z) = \psi(z)\leq \mathrm e^{-\lambda z},
\end{equation}
for all $z\geq 0$. On the other hand, the condition $\lambda>0$ implies that $m(t)$ is finite for some positive $t$, which is not satisfied by the class of \textit{heavy-tailed} distributions --- among these are the log-normal, Pareto and Levy distributions, for which $m(t)$ does not exist for any $t>0$ and thus $\lambda = 0$.
These distributions are commonly used in both queuing theory and risk theory, in the latter case to represent large claim sizes. Although in the case of heavy-tailed distributions the asymptotic behavior of $\psi$ is well studied, non-trivial explicit upper bounds for general distributions are less common in the literature and often conservative \cite{k2011}.
Fortunately, the conservatism of bounds on $\psi^*$ is not detrimental to the efficient application of \eqref{eq:bounds-main} --
the only property that matters is that $\lim_{y\to\infty}\psi^*(y) = 0$.
We thus adapt the bounds from \cite{k2011} to the case of the Cramer-Lundberg model.

\smallskip

\begin{theorem}\label{thm:korsh}
  For the Cramer-Lundberg model \eqref{model:CL} assume the following:
  \begin{enumerate}
    \item NPC, as in \eqref{eq:npc-lun};
    \item $F_C(x)<1$, for all $x>0$;
    \item $\Ex |C_0 - G_0|^\gamma<\infty$ for some $\gamma\geq 2$.
  \end{enumerate}
  The the following bounds hold true for all $z,y>0$:
  \begin{equation}\label{eq:bounds-korsh}
    |\psi(z) - (1-\phi(z,y))|\leq c \cdot y^{1-\gamma}.
  \end{equation}
  Here $c = \frac{3\max(c_1,c_2)}{a\gamma}+\frac12s_3^{\gamma-1}$,
  where $s_3 = \max(s_1,s_2)$,
  $s_1$ is any real number such that $\Ex \min(G_0 - C_0,s_1)\geq \frac23a$,
  and $s_2 = 2^{\gamma-1}\frac{\gamma-1}{a}\Ex(G_0-C_0)^2$.
  Also, $c_1$ and $c_2$ are given by
  \begin{align*}
    c_1 &= \gamma(\gamma-1)2^{\gamma-3}\Ex C_0^{\gamma-2}(G_0-C_0)^2,
    \\
    c_2 &= \gamma \Ex(s_3+C_0)^{\gamma-1}C_0.
  \end{align*}
\end{theorem}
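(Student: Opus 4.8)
The plan is to combine Theorem \ref{thm:main} with an explicit power-law tail bound for the maximum of a random walk, the latter imported from \cite{k2011}. The Cramer-Lundberg model satisfies Assumption \ref{ass:1} --- the map $h(z,\theta,\xi)=\theta$ does not depend on $z$, and $g(z,\theta,\xi)=z+G-C$ is non-decreasing in $z$ --- and its parameter space is a singleton, so $\psi^*\equiv\psi$ and Theorem \ref{thm:main} already gives, for all $z,y>0$,
\begin{equation*}
  |\psi(z)-(1-\phi(z,y))|\leq\psi^*(y)=\psi(y).
\end{equation*}
Thus the whole statement reduces to the one-sided estimate $\psi(y)\leq c\,y^{1-\gamma}$, in which the initial capital $z$ no longer appears.

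First I would rewrite $\psi(y)$ as the tail of a random-walk maximum. Put $\xi_k:=C_{k-1}-G_{k-1}$, so that $(\xi_k)_{k\geq1}$ is i.i.d.\ with $\Ex\xi_1=\Ex C_0-\Ex G_0=-a<0$ by the NPC \eqref{eq:npc-lun}, and set $S_0:=0$, $S_n:=\xi_1+\dots+\xi_n$, $M:=\sup_{n\geq0}S_n$. Since $S_n/n\to-a<0$ almost surely, $M<\infty$ almost surely. Taking $Z_0=y$ in \eqref{model:CL} one has $Z_n=y-S_n$, hence $\{\inf_{n\in\N_0}Z_n<0\}=\{M>y\}$ and $\psi(y)=\pr(M>y)$. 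Because $M\geq0$ and $\gamma-1\geq1$, Markov's inequality yields
\begin{equation*}
  \psi(y)=\pr(M>y)\leq\frac{\Ex M^{\gamma-1}}{y^{\gamma-1}},
\end{equation*}
so everything reduces further to the moment bound $\Ex M^{\gamma-1}\leq c$.

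This moment bound is where \cite{k2011} is used. The variable $M$ obeys the Lindley-type distributional identity $M\overset{d}{=}(\xi_1+M')^+$ with $M'\overset{d}{=}M$ independent of $\xi_1$ (rigorously: the finite-horizon maxima $M_n:=\max_{0\leq k\leq n}S_k$ increase to $M$ and obey a Lindley recursion, so one bounds $\Ex M_n^{\gamma-1}$ uniformly in $n$ and passes to the limit by Fatou's lemma). Applying the Lyapunov function $x\mapsto x^\gamma$ to this recursion and truncating the increment at a level $s_3$ at which the truncated walk still has drift bounded away from zero --- this is the role of $s_1$, which exists because $\Ex\min(G_0-C_0,s_1)\uparrow a>\tfrac23a$ by monotone convergence, while $s_2$ gives a second admissible level and $s_3=\max(s_1,s_2)$ --- produces a finite bound for $\Ex M^{\gamma-1}$ under hypothesis (3) $\Ex|C_0-G_0|^\gamma<\infty$ (and the attendant non-degeneracy hypothesis (2) $F_C(x)<1$ for all $x>0$ required by \cite{k2011}). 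Expanding $(\cdot)^\gamma$ and estimating its first- and second-order remainder terms --- the $\gamma$ and $\gamma(\gamma-1)$ prefactors in $c_2$ and $c_1$ are precisely these derivative coefficients, and the powers of $2$ come from inequalities of the form $(u+v)^\gamma\leq2^{\gamma-1}(u^\gamma+v^\gamma)$ --- reproduces exactly the quantities $\Ex C_0^{\gamma-2}(G_0-C_0)^2$ and $\Ex(s_3+C_0)^{\gamma-1}C_0$, hence $c_1,c_2$, and finally $c=\frac{3\max(c_1,c_2)}{a\gamma}+\frac12s_3^{\gamma-1}$. Chaining the three displays gives \eqref{eq:bounds-korsh}.

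The hard part is this last step. The inequality in \cite{k2011} is stated for an abstract random walk (equivalently, for the stationary waiting time of a single-server queue) in a somewhat different parametrization, so the real work is the bookkeeping that turns its generic constant into the claim/premium form above: checking that the truncation at $s_3$ is legitimate --- here $\gamma\geq2$ and the finite $\gamma$-th moment are exactly what is used --- that replacing the sharper $\Ex(\xi_1^+)^\gamma$-type quantities by the cruder $\Ex|C_0-G_0|^\gamma$ and by $\Ex(s_3+C_0)^{\gamma-1}C_0$ does not spoil the estimate, and that hypothesis (2) delivers whatever non-degeneracy \cite{k2011} needs. The distinction between $\pr(M>y)$ and $\pr(M\geq y)$, and the value of $\psi$ at the boundary $z=0$, are immaterial throughout.
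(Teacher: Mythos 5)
Your proposal is correct and follows essentially the same route as the paper's own proof: reduce via Theorem \ref{thm:main} to the one-sided estimate $\psi(y)\leq c\,y^{1-\gamma}$, identify $\psi(y)$ with the tail $\pr(S^*>y)$ of the maximum $S^*$ of the dual random walk $S_n$, invoke the moment bound $\Ex(S^*)^{\gamma-1}\leq c$ from \cite{k2011}, and finish with Markov's inequality. The paper treats the citation to \cite{k2011} as a black box, whereas you additionally sketch the Lindley-recursion/Lyapunov-function mechanism behind it, but that is expository detail rather than a different argument.
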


\begin{proof}
 Define the random walk $S = (S_n)_{n \in \N_0}$  on the same probability space of $Z$ by
  \begin{equation}
    S_{n+1} = S_n - G_n+C_n, \quad S_0 = 0.
  \end{equation}
  Further let $S^*:=\sup_{n \in \N_0}S_n$ be the maximum of the random walk $S$.
  It immediately follows from its definition that $\pr(S^*>y) = \psi(y)$, for any $y\geq 0$.
  On the other hand, it has been shown in \cite{k2011} that under the assumptions of the theorem it holds that $\Ex (S^*)^{\gamma-1}<c$. Due to the fact that $S^*$ is non-negative, by the Markov inequality we obtain that $\pr(S^*>y)\leq c\cdot y^{1-\gamma}$.
  The rest follows from \eqref{eq:bounds-main}.
\end{proof}

\smallskip

Some remarks are needed for Theorem \ref{thm:korsh}.
With focus on the assumptions in the statement,
the NPC is one of the most commonly-employed conditions for risk models,
and it is equivalent to assuming that the increment of the capital of the insurance company during one time period is on average positive.
If instead for the Cramer-Lundberg model it holds that $\Ex G_0 - \Ex C_0<0$,
then $\psi(z)=1$ for all values of the initial capital $z$.
Further, the condition $F_C(x)<1$ for all $x\geq 0$ means that the distribution of $C_0$ has an unbounded support on $\R_{\geq 0}$.
Whenever this is not satisfied, it holds that $C_0$ is bounded almost surely,
which leads to tighter bounds in \eqref{eq:bounds-lund}.
Let us also mention that under the assumptions in Theorem \ref{thm:korsh},
the quantities $c_1,c_2$ and $s_2$ are obviously finite.
Taking into account the fact that $a>0$, the finiteness of $s_1$ is also clear.

The last condition in the statement of Theorem \ref{thm:korsh},
namely $\Ex|C_0-G_0|^\gamma<\infty$ for some $\gamma\geq 2$, is more interesting.
First of all, notice that this condition can be relaxed as the following:
$\Ex C_0^\gamma<\infty$, for some $\gamma\geq 2$.
Indeed, if $a=\Ex G_0 - \Ex C_0>0$ then there always exists a $k\in \R_{\geq 0}$ such that
\begin{equation}\label{eq:trunc}
  \Ex \min(k,G_0) - \Ex C_0>0.
\end{equation}
For such a $k$ let us define $\hat G_0:=\min(k,G_0)$ to be the new variable for the premiums and by $\hat Z$ the corresponding risk process.
Clearly, if $\Ex C_0^\gamma<\infty$ then $\Ex |C_0 - \hat G_0|^\gamma<\infty$ and thus $\hat Z$ satisfies the last condition of Theorem \ref{thm:korsh}.
On the other hand, by coupling $Z$ and $\hat Z$ on the same probability space we obtain that $\hat Z\leq Z$ a.s. and thus $\psi(z)\leq \hat\psi(z)$ for all $z\geq 0$,
where $\hat\psi$ is the ruin probability for the risk process $\hat Z$.
Thus the bound in \eqref{eq:bounds-korsh} still holds if one relaxes the assumption $\Ex|C_0-G_0|^\gamma<\infty$ for some $\gamma\geq 2$,
to $\Ex C_0^\gamma<\infty$ for some $\gamma\geq 2$.
This can be considered to be the main requirement of Theorem \ref{thm:korsh}:
notice that the majority of the distributions for the claim size considered in the literature \cite{a2000,m2009},
even those extremely heavy-tailed,
has a final second moment and thus satisfies this assumption.

\smallskip

In order to further elucidate the overall technique presented in this work,
let us formulate the proposed solution of the ruin problem on the Cramer-Lundberg model.
Let the claim size $C_0$ and the premiums $G_0$ be given random variables, and assume that $\Ex C_0^2<\infty$ and $a = \Ex(G_0 - C_0) >0$.

The procedure goes as follows:
\begin{enumerate}
  \item[1.] pick up a precision level $\ve>0$
  \item[2a.] if $C_0$ has a bounded support, find the Lundberg coefficient $\lambda$ for $\eta_0 = G_0 - C_0$, and define quantity $y = -\frac1\lambda\log\frac12\ve$
  \item[2b.] if $C_0$ has an unbounded support, compute $\Ex G_0^2$.
  If $\Ex G_0^2 = \infty$ follow the truncation procedure and find $k>0$ satisfying \eqref{eq:trunc}.
  Select $y = \frac{2c}{\ve}$, where $c$ is computed as in Theorem \ref{thm:korsh} either for $G_0$ or for $\min(k,G_0)$
  \item[3.] depending on whether support of $C_0$ is bounded or not, take $y$ as defined either in [2a.] or in [2b.];
  find $\phi(z,y)$ with precision $\frac12\ve$
  \item[4.] define $\tilde\psi(z):=1-\phi(z,y)$.
  By Theorem \ref{thm:main}, it holds that $\|\psi - \tilde\psi\|\leq\ve$
\end{enumerate}

\medskip

It is also worth clarifying the way the two-barrier ruin probability is computed.
As above, let $\eta_0 = G_0 - C_0$ and let $F_\eta$ be its cumulative distribution function. Then
\begin{equation*}
  \phi(z,y) = 1_{(y,\infty)}(z) + 1_{[0,y]}(z)h_y(z),
\end{equation*}
where from \eqref{eq:fixpoint_phi} it follows that $h_y:[0,y]\to\R$ is the solution of the fixpoint equation
\begin{equation*}
  h(z,y) = F_\eta(y-z)+\int_0^t h_y(t)\mathrm dF_\eta(t-z),
\end{equation*}
which in case $F_\eta$ admits a density $f_\eta$ is a Fredholm integral equation of the second kind on a compact interval \cite{a1997} as
\begin{equation}\label{eq:fred-phi}
  h(z,y) = F_\eta(y-z)+\int_0^t h_y(t)f_\eta(t-z)\mathrm dt.
\end{equation}
Numerical methods can be applied to find the solution of this equation.
Section \ref{sec:cs} presents an example of this computational procedure.

\subsection{Asymptotic behavior of the ruin probability for a model with interest rates}
\label{ssec:CL-I}

Let us now tailor the main result in the previous section to the model given in Example \ref{ex:2}.
The presented bounds are derived directly through the corresponding Cramer-Lundberg model and are thus subject to similar conditions.
More precisely, let $(Z,I)$ be a risk process given by \eqref{model:CL-interest:Z}-\eqref{model:CL-interest:I}.
The NPC for this model is $\Ex G_0 - \Ex C_0>0$ \cite{c2002},
which is similar to that for the Cramer-Lundberg model.
The work in \cite{c2002} also shows that $\psi(z,i) \leq \hat\psi(z)$ for all $z,i\geq 0$,
where the latter ruin probability is defined over the Cramer-Lundberg risk process $\hat Z$ corresponding to $(Z,I)$ and characterized by
\begin{equation*}
  \hat Z_{n+1} = \hat Z_n+G_n-C_n,\quad \hat Z_0 = z.
\end{equation*}
Thus, in order to find upper-bounds on $\psi$,
we can leverage the ones on $\hat \psi$,
hence the procedure given in Section \ref{ssec:CL} directly applies.
The final bounds have the form $\|\psi - \tilde\psi\|\leq \ve$, where
\begin{equation}\label{eq:CL-I.approx}
  \tilde\psi(z,i) = 1 - \phi(z,i,y(\ve)),
\end{equation}
and where $y(\ve)$ is chosen based on the bounds for $\hat\psi$, either as in \eqref{eq:bounds-lund} or as in \eqref{eq:bounds-korsh}.

Although the approximation in \eqref{eq:CL-I.approx} can be made as precise as possible,
it involves the computation of the two-barrier probability $\phi$ over a domain that is now unbounded:
this is because of the interest variable that takes values over $\R_{\geq 0}$.
Since numerical methods for the solution of integral equations over unbounded sets are available only in a limited number of cases,
such approximation may not be useful in practice.
In order to cope with this issue, instead of approximating $\psi$ by the two-barrier ruin probability,
one can approximate it with the solution of another reach-avoid problem where the target set is defined as
\begin{equation*}
  B_{y,j} = \{(z,i):z> y\text{ or }i> j\}
\end{equation*}
and where the set of allowed states is as before $A_0 = \R_{\geq 0}\times \R_{\geq 0}$.
The fixpoint equation for the corresponding value function $w(z,i;A_0,B_{y,j})$ needs to be solved only over the compact set $A_0\setminus B_{y,j} = [0,y]\times [0,j]$, and dedicated numerical methods \cite{aklp2010} can be applied to find its solution with any given precision.
The result of Theorem \ref{thm:main}
can be adapted to this case as
\begin{equation}\label{eq:bounds-CL-I}
  \left|\psi(z,i) -1+w(z,i;A_0,B_{y,j})\right|\leq \max\left(\psi(y,0),\psi(0,j)\right),
\end{equation}
which leads to seeking for bounds on the right-hand side of \eqref{eq:bounds-CL-I}.
As mentioned above,
$\psi(y,0) \leq \hat\psi(y)$ for all $y\geq 0$,
and the latter probability can be bounded using results from Section \ref{ssec:CL}.
Thus only the quantity $\psi(0,j)$ is left to be studied.

In order to reach any possible precision,
we have to show that $\lim_{j\to\infty}\psi(0,j) = 0$.
Clearly, this is not the case if $F_G(0)>0$, since by considering the ruin event at the first step we have
\begin{equation}
  \psi(0,j)\geq \pr\{G_0 =0\}\pr\{C_0<0\} = F_G(0)(1 - F_C(0))
\end{equation}
regardless of the value of $j$.
To avoid this, we assume that $F_G(0) = 0$. Let us now fix an arbitrary $y>0$. Since $\psi$ is a reachability probability it admits \eqref{eq:fixpoint_v}, so:
\begin{align*}
\psi(0,j) &= P \psi(0,j) = \Ex_{(0,j)}[\psi(Z_1,I_1)]
\\
&\leq \sup\limits_{z<y}\psi^*(z)\cdot\pr_{(0,j)}\{Z_1<y\}+\sup\limits_{z\geq y}\psi^*(z)\cdot\pr_{(0,j)}\{Z_1\geq y\}
\\
&\leq \pr_{(0,j)}\{Z_1<y\}+\psi^*(y).
\end{align*}
Recall that $\psi^*(y) = \sup\limits_{i\geq 0}\psi(y,i) = \psi(y,0)$. Furthermore, for $j>0$ and all $\beta\in (0,1)$:
\begin{align*}
  \pr_{(0,j)}\{Z_1<y\} &= \pr\{G_0(1+j)-C_0<y\}
  \\
  &\leq \pr\left\{G_0-C_0< -j^\beta\Ex|G_0-C_0|\right\}+\pr\left\{j\cdot G_0<y+j^\beta\Ex|G_0-C_0|\right\}
  \\
  &\leq \pr\left\{|G_0-C_0|> j^\beta\Ex|G_0-C_0|\right\}+F_G\left(\frac{y+j^\beta\Ex|G_0-C_0|}j\right)
  \\
  &\leq \frac1{j^{\beta}}+F_G\left(\frac{y}{j}+\frac{\Ex|G_0-C_0|}{j^{1-\beta}}\right).
\end{align*}
As a result, for any $\beta\in (0,1)$ and $(z,i)\in \R\times \RR$, it holds that
\begin{equation}\label{eq:model-Cl-interest-bounds-3}
\left|\psi(z,i) -1+w(z,i;A_0,B_{y,j})\right|\leq \hat\psi(y)+\frac{1}{j^\beta}+F_G\left(\frac{y}{j}+\frac{\Ex|G_0-C_0|}{j^{1-\beta}}\right).
\end{equation}
Since $F_G(0) = 0$ and $F_G$ is right-continuous we have
\begin{equation}
  \lim\limits_{j\to\infty}\left(\frac1{j^\beta}+F_G\left(\frac{y}{j}+\frac{\Ex|G_0-C_0|}{j^{1-\beta}}\right)\right) = 0,
\end{equation}
thus the bound in \eqref{eq:model-Cl-interest-bounds-3} is consistent.

\subsection{Discussion}

So far we have developed techniques to deal with the ruin problem over discrete-time Markov models for risk processes.
More precisely, under Assumption \ref{ass:1} the result of Theorem \ref{thm:main} allows approximating the ruin probability by means of a computable two-barrier ruin probability with a precise bound on the approximation error.
Such bound is in turn related to the value of the ruin probability for large values of the initial capital,
and thus can often be made as small as needed:
more detailed examples have been provided in Sections \ref{ssec:CL} and \ref{ssec:CL-I}.

In the case of a general discrete-time Markov model not satisfying Assumption \ref{ass:1},
we have still been able to provide the recursions and fixpoint equation for the finite-horizon and infinite-horizon ruin probabilities, respectively.
Similar results have been also obtained for the two-barrier ruin problem.
Recall that we have extensively used the methods developed for the reachability and reach-avoid problems,
which are in particular crucial in the proof of Lemma \ref{lem:decomp} and hence in that of Theorem \ref{thm:main}.
It is further worth mentioning that the reachability problem can be used to study more complicated events than the reachability itself.
Namely, a wide range of events of interest, such as
\emph{``the capital of the company always stays positive, and is above the threshold value $T>0$ at given moment of time,''}
or such as \emph{``once the capital reaches the threshold value $T>0$, it stays positive for at least the next $N$ steps,''}
can be directly expressed by means of linear temporal logics -- see e.g. \cite[Chapter 4]{bk2008} or \cite{ta2013}.
Interestingly, finding the probability of such events can be further recast as a basic reachability problem over a slightly modified model \cite[Theorem 5]{tmka2013},
which is one of the celebrated results of an approach in formal verification called model-checking \cite{bk2008}.
As a result, the properties obtained for the reachability problem apply to a much richer class of events,
characterizing the desired recursions on the finite-time horizon and the fixpoint equations on the infinite-time horizon,
together with appropriate approximation methods for the latter case.
Arguably, such techniques allow for the risk analysis of more complicated properties in insurance science,
rather than the basic yet fundamental ruin property.
In addition, similar methods can be also applied over models of risk processes that require decision making:
e.g. investing in risky assets,
or re-insuring claims \cite{dr2009}.
Indeed, \cite[Corollary 3]{tmka2013} provided a finite-horizon recursion for the maximal and the minimal reachability,
whereas \cite[Theorems 2, 3]{tmka2013} characterized the fixpoint equations.
It is likely that result similar to Lemma \ref{lem:decomp} can be also obtained in such setting,
which would allow applying techniques we developed here to decision-dependent models of risk processes.
These developments however go beyond the scope of the current contribution.

\section{Case studies}\label{sec:cs}

This section introduces two case studies that focus on the computation of the quantities of interest.

\subsection{Case study: exponential tails}\label{ssec:CS-ruin}
We consider the study of the ruin probability for the Cramer-Lundberg model worked out by Yang in \cite{y1999}, where the income is kept constant ($G_n\equiv G_0 = 1.3035, \forall n$) and the claims are distributed according to the Generalized Inverse Gaussian law with density
\begin{equation*}
  f_C(x) = 1_{\RR}(x)\frac{1}{2 k}x^{-2}\exp\left(-x-\frac1x\right),
\end{equation*}
where $k\approx 0.139866$ is a normalizing constant. For this setup it was shown in \cite{y1999} that
\begin{equation}\label{eq:bounds-yang}
  \psi(z)\leq (1+0.1 z)^{-0.1}\mathrm e^{-z}.
\end{equation}
Since this bound is not tight for small values of $z$, we apply our results to find the value $\psi(z)$ with a given precision $\ve = 0.011$ for any $z\geq 0$. The approach is based on the bound provided in Theorem \ref{thm:main}, so we first find $y$ such that $\psi(y) \leq \ve$. It follows from \eqref{eq:bounds-yang} that if $y = 4.5$, then $\psi(y)\leq 0.0107$. We thus approximate $\psi$ via the function $\tilde\psi$ expressed via the two-barrier ruin probability $\tilde\psi(z) = 1 - \phi(z,4.5)$, so that
\begin{equation*}
  \|\psi - \tilde\psi\|\leq 0.0107.
\end{equation*}

We are left with computing the value of $\phi$ on the interval $[0,4.5]$, for which we employ equation \eqref{eq:fixpoint_phi}, which in this case takes the form
\begin{equation}\label{eq:cs-yang-fred}
  \phi(z,4.5) = F_C(z - 3.1965) + \int\limits_0^{4.5} \phi(t,4.5)f_C(z+1.3035 - t)\mathrm dt,
\end{equation}
for all $z\in [0,4.5]$.
Note that \eqref{eq:cs-yang-fred} is a Fredholm integral equation of the second kind.
Since the conditions of Lemma \ref{lem:decomp} are satisfied, this equation admits a unique solution -- alternatively, considering an operator $\l$, acting on $\B\left([0,4.5]\right)$ as
\begin{equation*}
  \l g(z) = \int\limits_0^{4.5} g(t)f_c(z+1.3035-t)\,ds,
\end{equation*}
its norm is equal to $\|\l\|\leq 0.9989 <1$, which again shows by the Contraction Mapping Theorem \cite[Proposition A.1]{hl1989} that \eqref{eq:cs-yang-fred} has a unique solution. The \texttt{FIE} toolbox \cite{as2008} is used to numerically solve the integral equation with a selected error $\leq 10^{-5}$. The result is given in Figure \ref{fig:cs1} in a blue, solid line. In order to validate the obtained bounds (depending on $\ve$ -- dashed green line in Figure \ref{fig:cs1}-(a)), Monte-Carlo simulations with $2000$ initial seeds for each cyan point in Figure \ref{fig:cs1}-(a), and each run over $2000$ iterations, were used. The outcomes are close to the solution of \eqref{eq:cs-yang-fred} obtained by integration and further within the $\ve$-upper bound, which is much less conservative than the original bound \eqref{eq:bounds-yang} of Figure \ref{fig:cs1}-(b).

\smallskip

\begin{figure}[ht]
\centering
\subfigure[Solution with an $\ve$-upper bound]{\includegraphics[keepaspectratio=true,width=7.5cm]{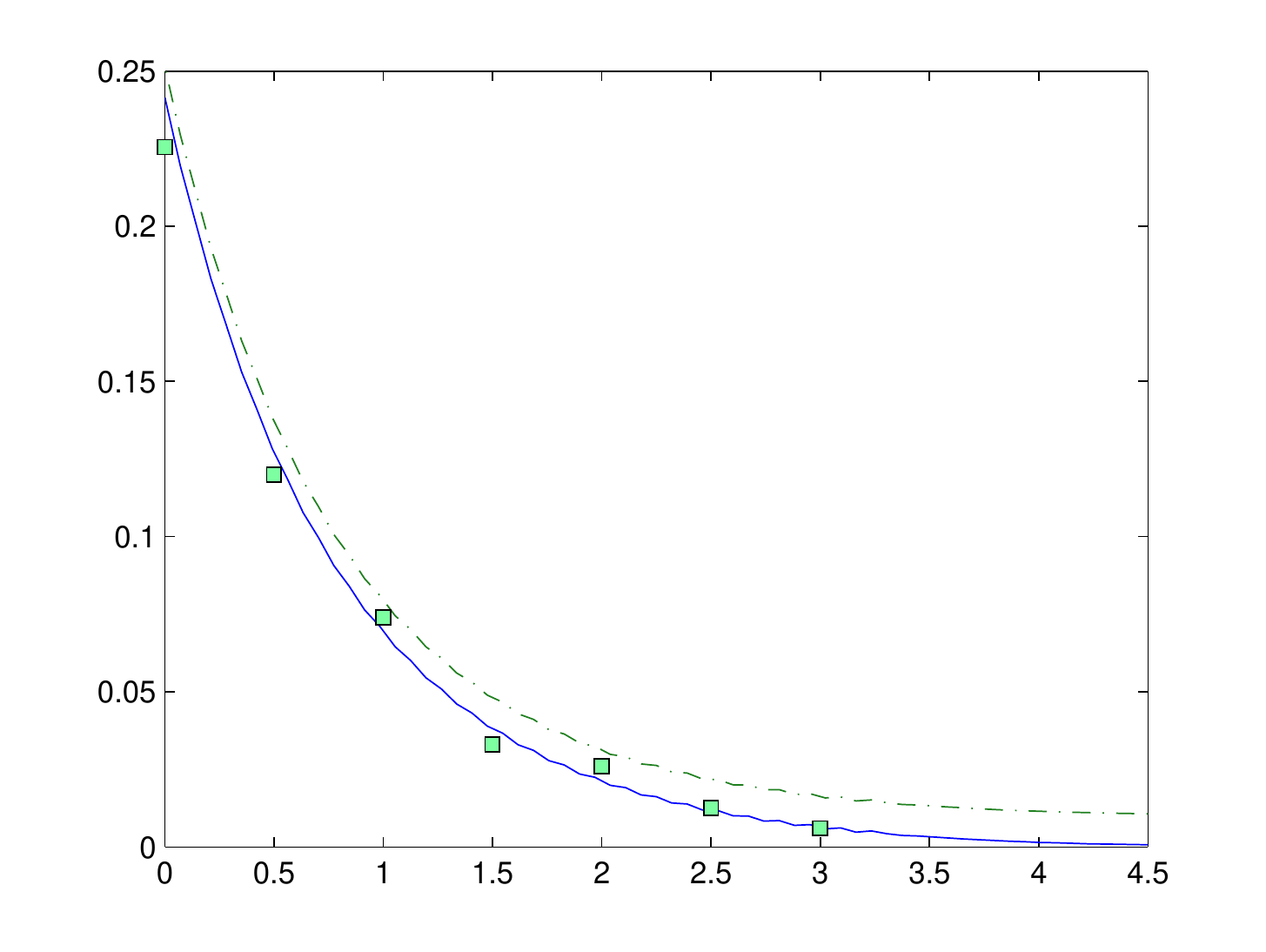}}
\subfigure[Solution with an upper bound from \cite{y1999}]{\includegraphics[keepaspectratio=true,width=7.5cm]{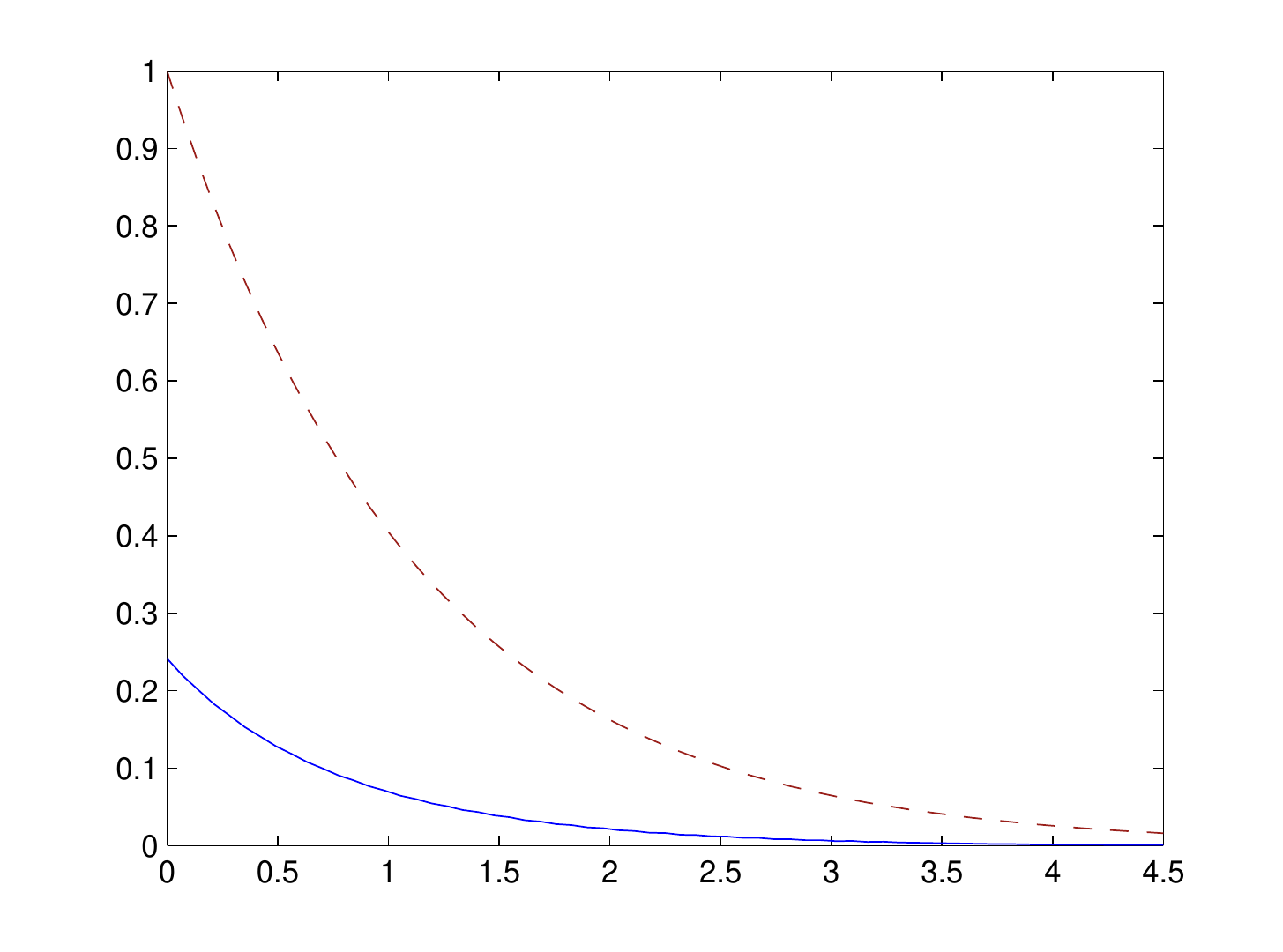}}
\caption{Solution of the ruin probability problem of Section \ref{ssec:CS-ruin} (no interest effect included).
The blue (continuous) line represents $\tilde\psi$,
the green (dashed) line in (a) provides an $\ve$-upper bound for the ruin probability $1-\phi$,
the cyan boxes correspond to results of Monte-Carlo simulations of $\psi$.
In (b), the brown (dashed) line shows the upper bound obtained in \cite{y1999}.}
\label{fig:cs1}
\end{figure}

We have also considered the model given by \eqref{model:CL-interest:Z} with the same fixed income and distribution of claims.
As in \cite{wh2008}, the interest rate is chosen to be i.i.d. $I_0 = 0.01b_0$,
where $b_0$ follows the binomial distribution $\mathrm B(10,1/2)$,
so that $\Ex I_0 = 0.05$.
Clearly for this case the bound \eqref{eq:bounds-yang} also holds,
thus we can again solve the problem on the interval $[0,4.5]$ to obtain the solution with an $\ve$-precision,
$\ve \leq 0.011$, as displayed in Figure \ref{fig:cs1-interest}.
The validation has also been performed by Monte-Carlo simulations,
with the same number of seeds and iterations as above -- the results are displayed in Figure \ref{fig:cs1-interest}-a.

\begin{figure}[ht]
\centering
\subfigure[Solution with an $\ve$-upper bound]{\includegraphics[keepaspectratio=true,width=7.5cm]{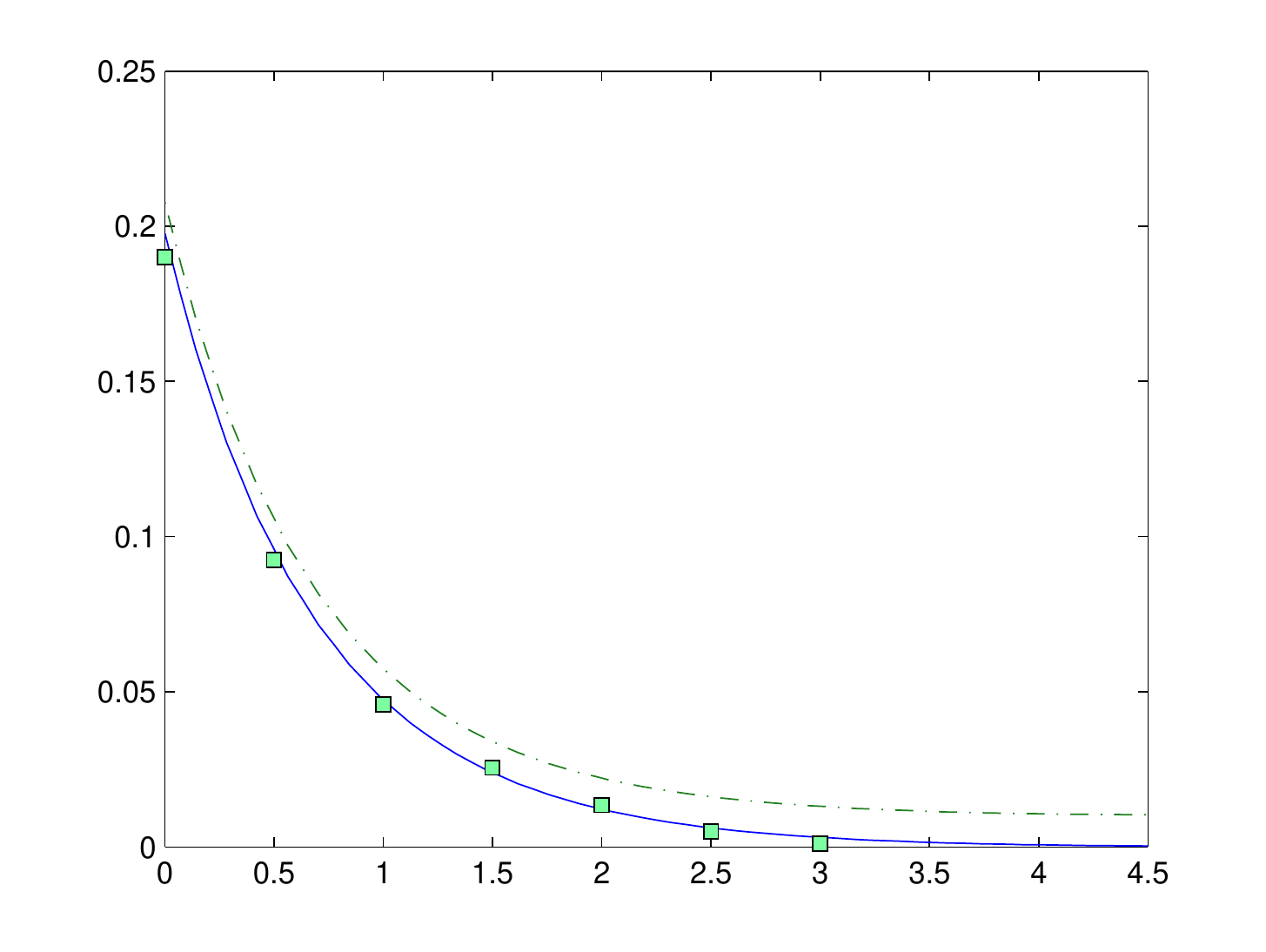}}
\subfigure[Solution with an upper bound from \cite{y1999}]{\includegraphics[keepaspectratio=true,width=7.5cm]{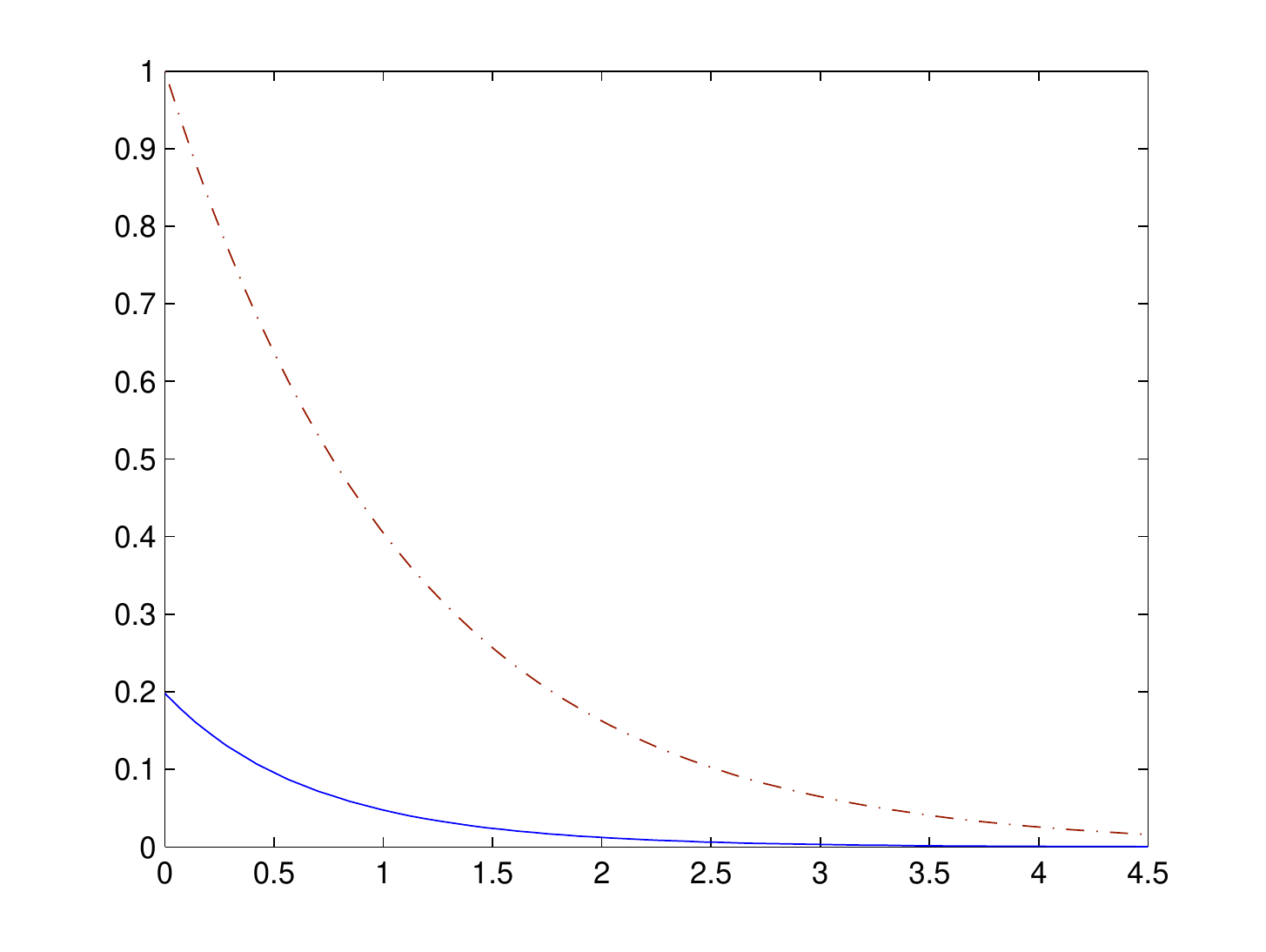}}
\caption{Solution of the ruin probability problem of Section \ref{ssec:CS-ruin} (with interest effect included).
The blue (continuous) line represents the solution of $1-\phi$,
the green (dashed) line in (a) provides an $\ve$-upper bound for the ruin probability $\psi$,
the cyan boxes correspond to results of Monte-Carlo simulations of $\psi$.
In (b), the brown (dashed) line shows the upper bound from \cite{y1999}.}
\label{fig:cs1-interest}
\end{figure}

\subsection{Case study: heavy tails}
\label{ssec:CS-tail}

The following case study is developed to display that the proposed technique works well also in the case of extremely heavy tails.
We again consider a Cramer-Lundberg model for which the distribution of the increment $\eta_0 = G_0-C_0$ is given by the following density function:
\begin{equation}\label{eq:poisson_pdf}
  f_\eta(t) = \frac{\sqrt 2}{\pi\left(1+(t-1)^4\right)}.
\end{equation}
One may notice that $\Ex |\eta|^{3-\delta} < \infty$ for any $\delta\in (0,3]$ but $\Ex |\eta|^3 = \infty$, hence this distribution is heavy-tailed.
Nevertheless, the NPS holds true since $\Ex \eta = 1$,
so we can apply results given in Section \ref{ssec:CL}.

First we find the value of $c$ as per Theorem \ref{thm:korsh}. For this purpose we pick $\gamma = 2$, so $s_1 = 1.07$ and $s_2 = 4$ since $\Ex \eta^2 = 2$. Thus $s_3 = \max\{s_1,s_2\} = 4$, and $c_1 = 2, c_2 = 0.83$ and as a result, $c=5$. Since we have obtained $\gamma = 2$ and $c=5$, in order to obtain an accuracy of $\ve = 0.1$ in \eqref{eq:bounds-korsh} we select $y = 50$.
Recall that we only need to find the value of $\phi(z,50)$ for $z\in [0,50]$. This is obtained by computing a solution of \eqref{eq:fred-phi}, which is a Fredholm equation of the second kind. Here $f_\eta$ is given in \eqref{eq:poisson_pdf} and $F_\eta$ has a closed form:
\begin{equation*}
  F_\eta(x) = \frac{\pi-\arctan(1+\sqrt{2}-x\sqrt{2})+\arctan{(1-\sqrt2+x\sqrt2)}+\arctanh\left(\frac{\sqrt2(x-1)}{1+(x-1)^2}\right)}{2\pi}.
\end{equation*}
The \texttt{FIE} toolbox is employed to numerically solve the integral equation, and the results are displayed in Figure \ref{fig:cs2} for the value of argument in $[0,5]$. Monte Carlo simulations, set up similarly to the previous case study, have been run to validate the results. Again the obtained bounds are much less conservative than the original ones in the literature \cite{k2011}.

\begin{figure}[ht]
  \centering
    \subfigure[Solution with an $\ve$-upper bound]{\includegraphics[keepaspectratio=true,width=7.5cm]{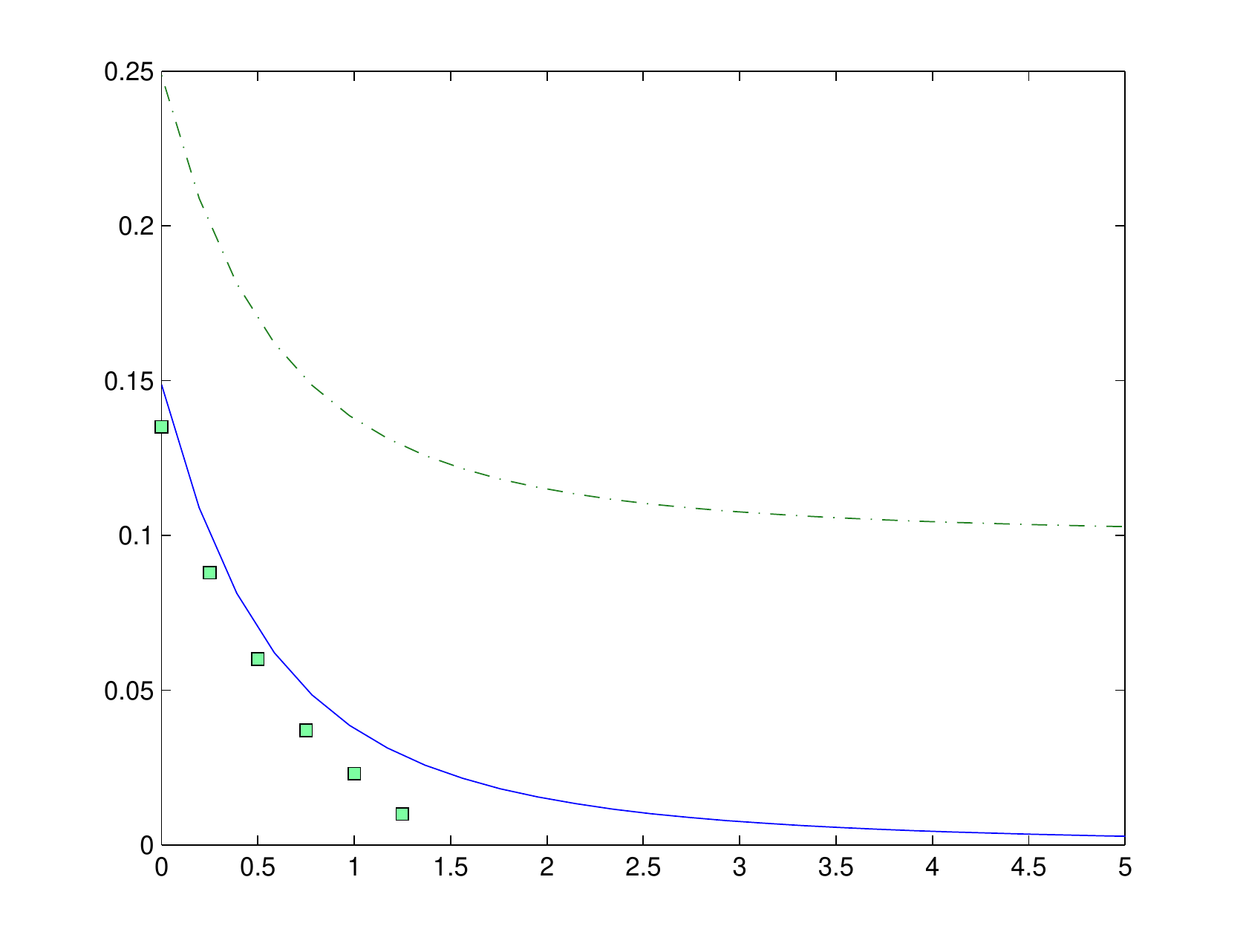}}
    \subfigure[Solution with an upper bound from \cite{k2011}]{\includegraphics[keepaspectratio=true,width=7.5cm]{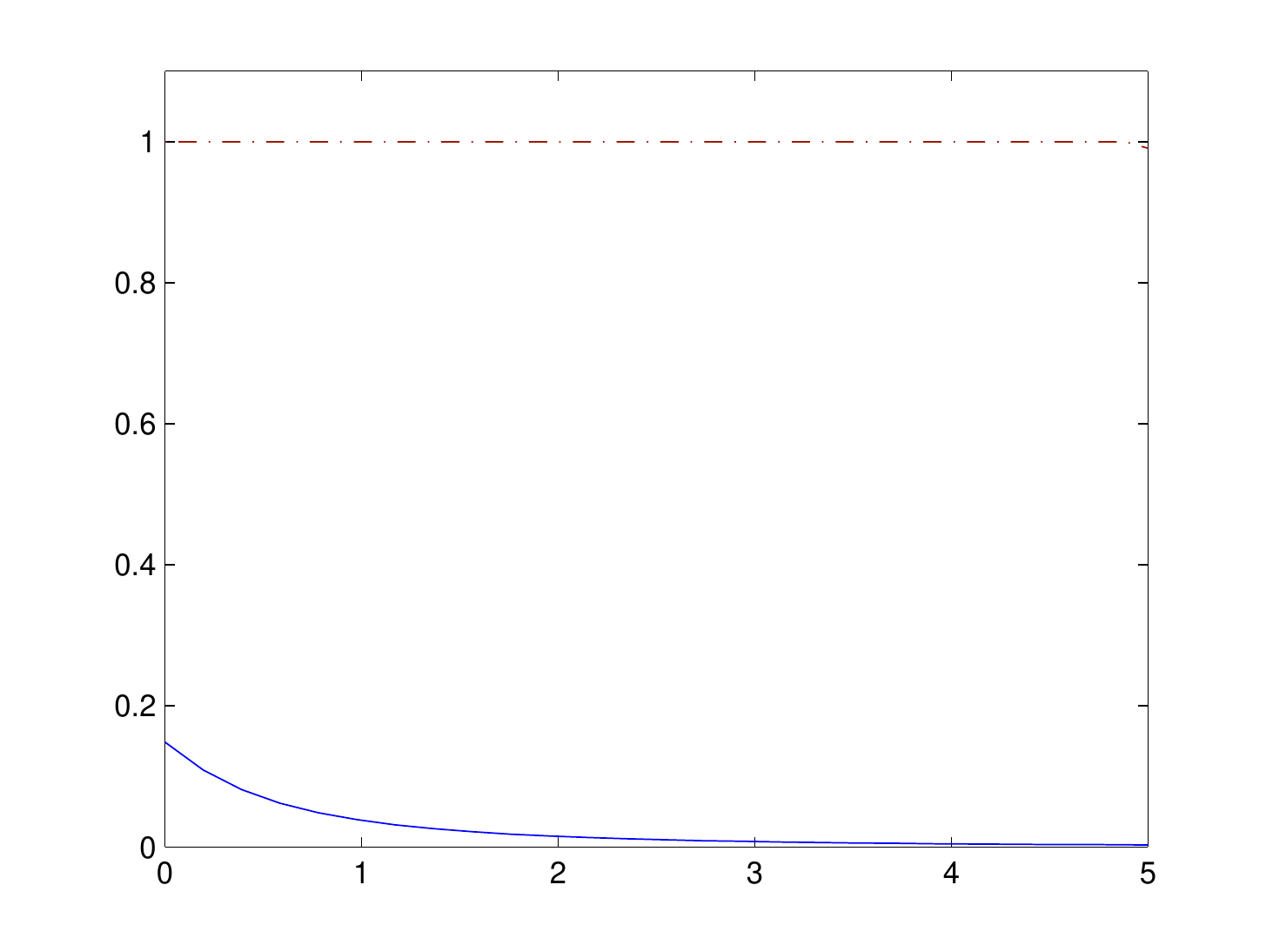}}
  \caption{
      Solution of the tail probability problem. The blue (continuous) line is a plot of $1-\phi$ obtained by integrating \eqref{eq:fred-phi},
      the green (dashed) line in (a) gives an $\ve$-upper bound for the ruin probability $\psi$,
      whereas the cyan boxes correspond to outcomes of Monte-Carlo simulations of $\psi$ used to validate the results.
      In (b), the brown (dashed) line shows the upper bound obtained in \cite{k2011}.
  }
  \label{fig:cs2}
\end{figure}

\section{Conclusions}\label{sec:concl}

This work has discussed the problem of computing the ruin probability for general discrete-time Markov models of risk processes.
The presented approach has highlighted common issues that are shared by a number of models,
and has put forward techniques to overcome such problems.
These methods are based on the relationship between the ruin and two-barrier ruin problems,
and further leverage asymptotic bounds available in the literature.
The generality of the approach makes it possible to deal with diverse complex models in a unified way,
thus allowing considering interest effects or the income from risky investments.

With the goal of elucidating the practical application of the developed results,
this work has provided instances of the implementation of the methods,
together with numerical case studies (with further validation by Monte-Carlo simulations).

The results of this works allow claiming that for model instances (e.g. random walks) with available asymptotic bounds,
the solution of the ruin probability problem can be approximately characterized and computed with any given precision,
is valid globally (for any value of the initial capital) and it improves results (asymptotic bounds) in the literature.

\section*{Acknowledgments}

The authors are grateful to Prof. K. Atkinson for his help on the extremely precise and fast \texttt{FIE} toolbox \cite{as2008}.

\bibliographystyle{alpha}
\bibliography{../../../my_bib}

\end{document}